\let\origvec\vec
\let\vec\origvec
\spnewtheorem*{protocol}{Protocol}{\bfseries}{\itshape}
\let\emptyset\varnothing
\newcommand{\prot}[2]{\ensuremath{\textbf{Protocol}\,\, #1:}\,\,#2\\}
\newcommand{\simu}[2]{\ensuremath{\textbf{Simulator}\,\, #1:}\,\,#2\\}
\renewcommand{\qed}{\hfill$\blacksquare$}
\title{Practical Relativistic Zero-Knowledge for $\mathbf{NP}$}
\author{
Claude Cr\'epeau \inst{1}\! \thanks{Supported in part by FRQNT (INTRIQ) and NSERC.}
\and 
Arnaud 
Massenet 
\inst{2}\! \thanks{This research was performed while a student at McGill University under C.C.'s supervision.}
\and \\
Louis Salvail\inst{3}\! ${}^{\star}$
\and
Lucas 
Stinchcombe\inst{4}\! ${}^{\star\star}$
\and Nan Yang\inst{5}\! \thanks{Supported in part by Professors Jeremy~Clark, and Claude~Cr\'epeau.}
}
\institute{
McGill University, Montr\'eal, Qu\'ebec, Canada.
{crepeau@cs.mcgill.ca}
\and
University of Oxford, Oxford, Oxfordshire, UK.
{arnaud.massenet@mail.mcgill.ca}
\and
Universit\'e de Montr\'eal,
Montr\'eal, Qu\'ebec, Canada.
{salvail@iro.umontreal.ca}
\and
Bloomberg L.P, Tokyo, Japan.
{lucas.stinchcombe@mail.mcgill.ca}
\and
Concordia University, Montreal, Quebec, Canada.
{na\_yan@encs.concordia.ca}
}
\begin{document}

\maketitle

\begin{abstract}
In this work we consider the following problem: in a Multi-Prover environment, how close can we get to prove the validity of an $\mathbf{NP}$ statement in Zero-Knowledge ?
We exhibit a set of two novel Zero-Knowledge protocols for the 3-COLorability problem that use two ({\em local}) provers or  three ({\em entangled}) provers and only require them to reply
two trits each. This greatly improves the ability to prove Zero-Knowledge statements
on very short distances with very minimal equipment.
\end{abstract}


\section{Introduction}

The idea of using distance and special relativity (a theory of motion justifying that the speed of light is a sort of asymptote for displacement)
to prevent communication between participants to multi-prover proof systems can be traced back to Kilian\cite{kilian1990strong}.
Probably, the original authors (Ben Or, Goldwasser, Kilian and Wigderson) of \cite{BGKW88} had that in mind already, but it is not explicitly written anywhere.
Kent was the first author to venture into {\em sustainable} relativistic commitments \cite{PhysRevLett.83.1447} and introduced the idea of
arbitrarily prolonging their life span by playing some ping-pong protocol between the provers (near the speed of light). This idea was made considerably more practical by
Lunghi \emph{et al.}~in \cite{PhysRevLett.115.030502} who made commitment sustainability much more efficient. 
This culminated into an actual implementation 
by Verbanis \emph{et al.}~in \cite{PhysRevLett.117.140506} where commitments were sustained for more than a day!

As nice as this may sound, such {\em long-lasting} commitments have found so far very little practical use. Consider for instance the zero-knowledge proof for Hamiltonian Cycle
as introduced by Chailloux and Leverrier\cite{CL17}. Proving in Zero-Knowledge that a 500-node graph contains a Hamiltonian cycle would require transmitting $250\,000$
bit commitments (each of a couple hundreds of bits in length)
and eventually sustaining them before the verifier can announce his choice of unveiling the whole adjacency matrix or just the Hamiltonian cycle.
For a graph of $|V|$ vertices, this would require an estimated $200 |V|^{2}$ bits of communication before the verifier can announce his choice $chall$ (see Fig.~\ref{spacetime}).
This makes the application prohibitively expensive. If you use a larger graph, you will need more time to commit, leading to more distance to implement the protocol of \cite{CL17}.
Either a huge separation is necessary between the provers (so that one of them
can unveil according to the verifier's choice $chall$ before he finds out the committal information $B$ used by the other prover while the former must commit all the necessary information
before he can find out the verifier's choice $chall$) or we must achieve extreme communication speeds between prover-verifier pairs. This would only be possible by vastly parallelizing
communications between them at high cost.
Modern (expensive) top-of-line communication equipment may reach throughputs of roughly 1Tbits/sec.
A back of the envelope calculation estimates that the distance between the verifiers must be at least 100 km to transmit $250\,000$ commitments at such a rate.

\begin{figure}[h]
\begin{center}
\includegraphics[angle=0, width=1\textwidth]{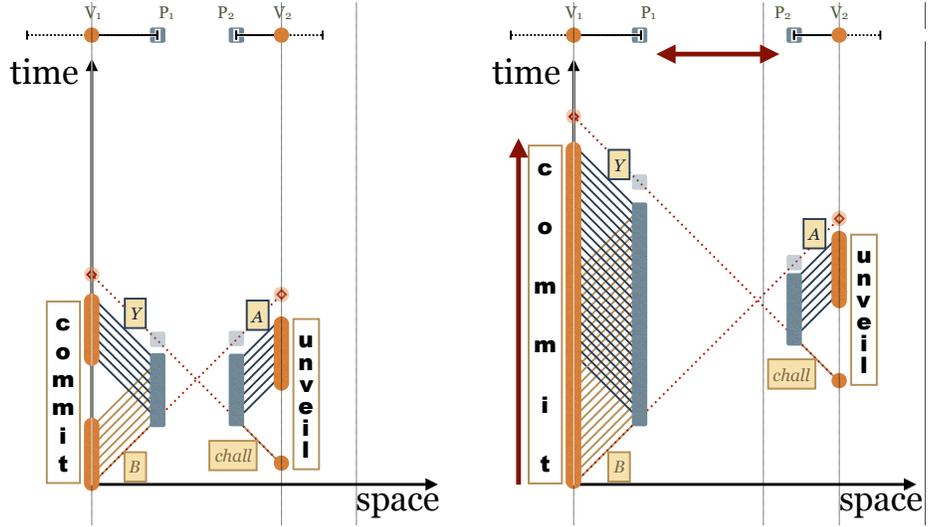}
\end{center}
\caption{Space-Time diagrams of \cite{CL17}'s ZK-MIP${}^{\star}$ for $\mathbf{NP}$. ($45^{\circ}$ diagonals are the speed of light.)} 
\label{spacetime}
In the above two diagrams, $V_{1}$ at a first location sends a random matrix $\fbox{$B$}$ to $P_{1}$
who uses each entry to commit an entry of the adjacency matrix $\fbox{$Y$}$ of $G$. At another location, $V_{2}$ sends a random challenge $\fbox{$chall$}$ to $P_{2}$ who unveils
all or some commitments as $\fbox{$A$}$. At all times, $V_{1}$ and $V_{2}$ must make sure that the answers they get from $P_{1}$ and $P_{2}$ come early enough that the direct communication
line between $V_{1}$ and $V_{2}$ (even at the speed of light) is not crossed.
The transition from left to right shows that increasing the number of nodes (and thus increasing the total commit time) pushes the verifiers further away from each other.
In \cite{CL17} the distance must increase quadratically with the number of nodes in the graph. 
\end{figure}

In this work we consider the following problem: in a Multi-Prover environment, how close can we get the provers in a Zero-Knowledge IP showing the validity of an $\mathbf{NP}$ statement ?
We exhibit a set of (3) novel Zero-Knowledge protocols for the 3-COLorability problem that use two ({\em local}) provers or  three ({\em entangled})  provers and only require them to communicate
two trits each after having each received an edge and two trits each from the verifier.
 This greatly improves the ability to prove Zero-Knowledge statements
on very short distances with very little equipment.
In comparison, the protocol of \cite{CL17} would require transmitting millions of bits between a prover and his verifier before the latter may disclose what to unveil or not. This implies the provers would have to be very
far from each other because all of these must reach the verifier {\em before} the former can communicate with its partner prover. 

Although certain algebraic zero-knowledge multi-prover interactive proofs for $\mathbf{NP}$ and 
$\mathbf{NEXP}$ using explicitly no commitments at all have been presented 
before in \cite{DBLP:journals/combinatorica/LapidotS95},
\cite{DBLP:conf/stoc/FeigeK94} (sound against local provers) and \cite{DBLP:journals/eccc/ChiesaFGS18},\cite{DBLP:journals/eccc/GriloSY19}
 (sound against entangled provers), in the local cases making these protocols entanglement sound
is absolutely non-trivial, whereas in the entangled case the multi-round structure and the amount of communication in each round makes implementing the protocol completely impractical as well.
(To their defense, the protocols were not designed to be {\em practical}).


The main technical tool we use in this work is a general Lemma of Kempe, Kobayashi, Matsumoto, Toner, and Vidick\cite{doi:10.1137/090751293} to prove soundness of a three-prover
protocol when the provers are {\em entangled} based on the fact that a two-prover protocol version is sound when the provers are only {\em local}. More precisely, they proved this when the three-prover
version is the same as the two-prover version but augmented with an extra prover who is asked exactly the same questions as one of the other two at random and is expected to give the same exact answers.

Our protocols build on top of the earlier protocol due to Cleve, H{\o}yer, Toner and Watrous\cite{CHTW04} who presented an extremely simple and efficient solution to
the 3-COL problem that uses only two provers, each of which is queried with either a node from a common edge, or twice the same node. In the former case, the verifier checks that
the two ends of the selected edge are of distinct colours, while in the latter case, he check only that the provers answer the same colour given the same node.
On the bright side, their protocol did not use commitments at all but unfortunately it did not provide Zero-Knowledge either. Moreover, it is a well established fact that this protocol cannot possibly be sound against entangled
provers, because certain graph families have the property that they are not 3-colourable while having entangled-prover pairs capable of winning the game above with probability one.
This was already known at the time when they introduced their protocol.
The reason this protocol is not zero-knowledge follows from the undesirable fact that
dishonest verifiers can discover the (random) colouring of non-edge pairs of nodes in the graph, revealing if they are of the same colour or not in the provers' colouring.

\begin{figure}[h]
\centering
\includegraphics[angle=0, width=0.6\textwidth]{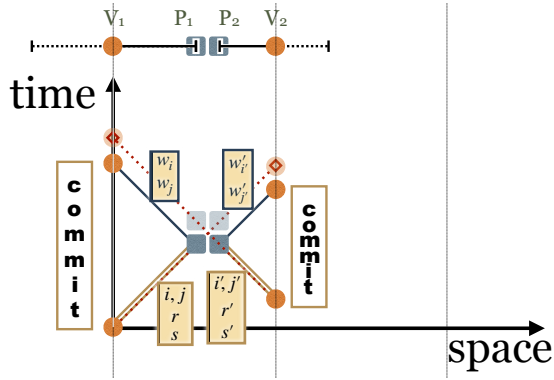}
\caption{Space-Time diagram of our ZK-MIP${}^{\star}$ for $\mathbf{NP}$. ($45^{\circ}$ diagonals are the speed of light. )}
\label{spacetime2020}
\end{figure}

We are able to remedy to the zero-knowledge difficulty by allowing the provers to use commitments for the colour of their nodes. However they use these commitments in an innovative way that we call the
{\em unveil-via-commit principle}  (of independent interest) explained below.
 For this purpose we use commitments similar to those of Lunghi \emph{et al.}\cite{PhysRevLett.115.030502}
but in their simplest form possible, over the field $\mathbb F_{3}$ (or $\mathbb F_{4}$ if you insist working in binary), and thus with extremely weak binding property but also minimal in communication cost:
a complete execution of the basic protocol transmits exactly two node numbers (using only $\log |V|$ bits each) and two trits from verifiers to provers and two trits back from the provers to verifiers (see Fig.~\ref{spacetime2020}).
This implies that for a fixed communication speed, the minimal distance of the provers in our protocol increases logarithmically with the number of nodes whereas the same parameter grows quadratically in \cite{CL17}.
Nevertheless, this is good enough to obtain a zero-knowledge version of the protocol that remains
sound against {\em local} pairs of provers. The main idea being that the provers will each commit to the colours of two requested nodes only if they form an edge of the graph. To unveil the colour of any node, the verifiers
must request commitment of the {\em same} node by {\em both} provers but using different randomizations. This way the verifiers may compute the colour of a node from the {\em linear system} established by the
two commitments and not by explicitly requesting anyone to unveil. This is the unveil-via-commit principle (very similar to the double-spending mechanism of the untraceable electronic cash of Chaum, Fiat and Naor\cite{Chaum:1990:UEC:88314.88969}).
We then use the Lemma of \cite{doi:10.1137/090751293} to prove soundness of the three-prover
version of this protocol even when the provers are {\em entangled}. A positive side of the protocol of \cite{CL17}, however,   is the fact that only two provers are necessary while we use three.
Zero-Knowledge follows from the fact that only two edge nodes can be unveiled
by requesting the same edge to both provers. Otherwise only a single node may be unveiled.
Finally, we show that even the three-prover version of this protocol retains the zero-knowledge property:
requesting any three edges from the provers may allow the verifiers to unveil the colours of a triangle in the graph but never two end-points that do not form an edge
(going to four provers would however defeat the zero-knowledge aspect).

An actual physical implementation of this protocol is currently being developed in collaboration with Pouriya Alikhani (McGill),
Nicolas Brunner, S\'ebastien Designolle, Weixu Shi,
and Hugo Zbinden (Universit\'e de Gen\`eve).

\subsection{Implementations Issues}
Traditionally in the setup of Multi-Prover Interactive Proofs, there is a single verifier interacting with the many provers.
However, when implementing no-communication via spatial separation (the so called relativistic setting)
it is standard to break the verifier in a number of verifiers equal to the number of provers, each of them interacting
at very short distance from their own prover.
The verifiers can use the timing of the replies of their respective provers to judge their relative distance.
In practice, this means that we can implement MIPs under relativistic assumptions
if the verifier are ``split'' into multiple verifiers, each locally interacting with its corresponding prover.
The verifiers use the distance between {\em themselves} to enforce the impossibility of the provers to communicate:
no message from a verifier can be used to reply to another verifier faster than the speed of light {\em wherever the provers are located}.

Moreover, multi-prover interactive proof systems may have several rounds in addition to several provers. 
In general, protocols with several rounds may cause a treat
to the inherent assumption that the provers are not allowed to communicate
during the protocol's execution. Nevertheless, most of
the existing literature resolves this issue by providing an honest verifier that is {\em non-adaptive}. To simplify this task, most of the protocols
are actually single-round. We stick to these guidelines in this work. Moreover, in  order to prove soundness of our protocols against entangled
provers, we use a theorem that is currently only proven for single-round protocols.
The protocols we describe are indeed single-round and non-adaptive.





\newcommand{\nneq}[0]{\neq\!\!\!\!\neq}
\newcommand{\inn}[0]{\!\in\!}
\newcommand{\protname}[2]{\ensuremath{\Pi^{(#1)}_{\text{#2}}}}
\newcommand{\twopstd}{\protname{2}{\text{std}}}
\newcommand{\twopcl}{\protname{2}{\text{loc}}} 
\newcommand{\threepq}{\protname{3}{\text{qnl}}} 
\newcommand{\twothreep}{\protname{2/3}{\text{q}}}
\newcommand{\badp}[1]{\ensuremath{\widetilde{#1}}}
\newcommand{\louismod}[2]{#2}
\section{Preliminaries}\label{SEC:PRELIM}

\subsection{Notations}
Random variables $A,B\in \Gamma$ are said to be equivalent, denoted $A=B$,
if for all $x\in \Gamma$, $\Pr{(A=x)}=\Pr{(B=x)}$.
The class of probabilistic polynomial-time  Turing machines 
will be denoted $\ppt$ in the following. A \ppt\ Turing machine
is one having access to a fresh infinite read-only tape of random values
(uniform values from the set of input symbols) 
at the outset of the computation. In the following, adversaries
will also be allowed (in some cases) to be quantum machines. 
The precise ways quantum and classical machines are defined 
is  not important in the following.  

For $M$ a Turing machine, we denote by $M(x)$ 
it execution with $x$ on its input tape ($x$ being a string of 
the tape alphabet symbols).
A Turing machine (quantum or classical) augmented with read-only auxiliary-input tapes   
and write-only auxiliary-output tapes is called  an \emph{interactive Turing machine}
(ITM).
Read-only input tapes provide incoming messages while the write-only output
tapes  allow to send messages. Interactive Turing machine $M_1$ and $M_2$ 
are said to \emph{interact} when for each of them, 
one of its write-only auxiliary-output tape corresponds
to one read-only auxiliary-input tape of the other Turing machine.
An execution of 
interactive Turing machines $M_1,\ldots, M_k$ on common input $x$
is denoted $[M_1\ldots M_k](x)$.  For $1\leq i \leq k$, 
machine $M_i$ \emph{accepts} the interactive computation
on input $x$ if it stops in state $\mathsf{accept}$ 
after the execution $[M_1\ldots M_k](x)$.
When the ITM $M_i$ that accepts a computation is clear from the context,
we say that $[M_1\ldots M_k](x)$ accepts when $M_i$'s
final state is $\mathsf{accept}$. 
In this scenario, $\Pr{\left([M_1\ldots M_k](x)=\mathsf{accept}\right)}$
denotes the probability that $M_i$ terminates in state $\mathsf{accept}$
upon common input $x$. Quantum machines are also interacting 
through communication tapes the same way than for classical machines.
When a quantum machine $M_1$ interacts with a classical machine 
$M_2$, we suppose that the write-only auxiliary tape and the reade-only
auxiliary tape of $M_1$ used to communicate with $M_2$ are classical.
This is the situation we will be addressing almost all the time in the 
following. A quantum machine $M$ is also allowed to have a quantum auxiliary
read-only input tape that may contain a part of a quantum state shared with other
machines. This allows to model machines sharing entanglement at the outset
of an interactive computation.  
Henceforth, we suppose that the (main) input tape of all machines
(quantum or classical)  is classical.

In the following, $G=(V,E)$ denotes an undirected graph
with vertices $V$ and edges $E$. If $n=|V|$ then we denote 
the set of vertices in $G$ by $V=\{1,2,\ldots, n\}$. We suppose
that $(i,i)\notin E$ for all $1\leq i \leq n$ (i.e. $G$ has no loop).
We denote uniquely each edge in $E$ as $(i,j)$ with $j>i$.
For $i\in V$, let $\mathsf{Edges}(i) := \{(j,i)\in E\}_{j<i}\cup \{(i,j)\in E\}_{j>i}$ be 
the set of edges connecting vertex $i$ in $G$.
For $e,e'\in E$, we define $e\cap e'=i\in V$ if $e$ and $e'$ have only
one vertex $i\in V$ in common. When $e$ and $e'$ have four distinct
vertices in $V$, we set $e\cap e' =0$. Finally, when $e=e'$, 
we set $e\cap e' := \infty$. 
For readability, we use the following special notations: $(a, b) \nneq (c,d)$ means $a\neq c$ {\bf and} $b\neq d$, while as always, $(a, b) \neq (c,d)$ simply means $a\neq c$ {\bf or} $b\neq d$.

\subsection{Non-local Games, Multi-Prover Interactive Proofs, and Relativistic Proofs}

Multi-provers interactive protocols are protocols involving a set of \emph{provers}
modelled by interactive Turing machines, each of them interacting
with an interactive \ppt\ Turing machine called the verifier $\verifier$. 
Although all provers may share an infinite read-only auxiliary input tape
at the outset of their computation, they do not not interact with
each other. When the provers are quantum, an extra auxiliary read-only quantum input 
tape is given and can be entangled with other provers at the beginning.

\begin{definition}

Let $\prover_1, \ldots, \prover_k$ be computationally unbounded interactive
Turing machines and let $V$ be an interactive \ppt\ Turing machine. 
The $\prover_i$'s share a joint, infinitely long, read-only random tape
(and an auxiliary reads-only quantum input tape if the provers are quantum). 
Each $\prover_i$ interacts
 with $\verifier$ but cannot interact with $\prover_j$ 
 for any $1\leq j\neq i\leq k$. We call $[\prover_1, \ldots, \prover_k, \verifier]$ 
 a \emph{$k$-prover interactive protocol ($k$--prover IP)}.

\end{definition}

A $[\prover_1, \ldots, \prover_k, \verifier]$ 
$k$-prover interactive protocol is a \emph{multi-prover interactive proof system for $L$}
if it can be used to show $\verifier$ that 
a public input $x$ is such that $x\in L$. At the end of its computation, 
$\verifier$ concludes $x\in L$ if and only if it ends up in state $\textsf{accept}$.
We restrict our attention to interactive proof systems with perfect
completeness since all our protocols have this property.

\begin{definition} \label{kps}
The $k$--prover interactive protocol
$\Pi=(\prover_1, \ldots, \prover_k, \verifier)$
is said to be a \emph{$k$-prover interactive proof system
with perfect completeness for $L$} if there exists 
$q(n) < 1-\frac{1}{\poly}$ such that following holds:
\begin{description}
\item[perfect completeness:] $(\forall x \in L)\left[ \Pr{\big([\prover_1,\ldots, \prover_k, \verifier](x)= \emph{\textsf{accept}}\big)} = 1\right]$,
\item[soundness:] $(\forall x \notin L)
(\forall \badp{\prover}_1, \ldots, \badp{\prover}_k)\left[ 
\Pr{\big([\badp{\prover}_1,\ldots, \badp{\prover}_k, \verifier](x)= \emph{\textsf{accept}}\big)}\leq q(|x|)\right]$.
\end{description}
The parameter $q(|x|)$ is called the \emph{soundness error} of $\Pi$.
Soundness can hold against classical provers or against
quantum provers sharing entanglements. The former case is called
\emph{sound against classical provers} while to latter is called 
\emph{sound against entangled provers}.
\end{definition}

Consider a $k$--prover interactive proof system 
$\Pi(x)$ (with or without perfect
completeness) for $L$ executed with public input $x \notin L$.
In this situation, $\Pi(x)$ defines what is called a \emph{quantum
game}.
The minimum value $q(|x|)$ such that for all $\prover'_1,\ldots, \prover'_k$,
$\Pr{\big([\prover'_1,\ldots, \prover'_k, \verifier](x)= \emph{\textsf{accept}}\big)}\leq q(|x|)$
is often called the \emph{classical value of  game} $\Pi[x]$ and is
denoted $\omega(\Pi(x))$ when the provers are restricted to be classical and unable
to communicate with each other upon public input $x$. When the provers, still unable 
to communicate with each other,  are allowed to carry their
computation quantumly and share entanglements,
we denote by $\omega^*(\Pi(x))\geq \omega(\Pi(x))$ the minimum value $q(|x|)$ 
such that for all such quantum provers $\prover'_1,\ldots,\prover'_k$,
$\Pr{\big([\prover'_1,\ldots, \prover'_k, \verifier](x)= \emph{\textsf{accept}}\big)}\leq q(|x|)$. In this case, $\omega^*(\Pi(x))$
is called the \emph{quantum value of game} $\Pi(x)$.
A $k$--prover interactive proof system for $L$ is said to be \emph{symmetric}
if \verifier\ can permute the questions to all provers without 
changing their distribution.  
The following result
of  Kempe, Kobayashi, Matsumoto, Toner, and Vidick\cite{doi:10.1137/090751293}
shows that the classical value of a symmetric one-round classical game 
cannot be too far from the quantum value of a \emph{modified} game.
Given a symmetric one-round two-prover game $\Pi$,
one can always add a third prover $\prover_3$  and \verifier\ asks
$\prover_3$ the same question than $\prover_1$ with probability $\frac{1}{2}$
or the same question than $\prover_2$  with probability $\frac{1}{2}$. Then,
\verifier\ accepts if $\prover_1$ and $\prover_2$ would be accepted in $\Pi(x)$
and   if $\prover_3$ returns the same answer than the one returned 
by the prover it emulates. We call $\Pi'(x)$ the modified game 
obtained that way from $\Pi(x)$.

\begin{lemma}[\cite{doi:10.1137/090751293}, Lemma~17]\label{values}
Let $\Pi(x)$ be a two-prover one-round symmetric game and
let $\Pi'(x)$ be its modified version with three provers. 
If $\omega^*(\Pi'(x))>1-\varepsilon$ 
then $\omega(\Pi(x))>1- \varepsilon - 12|Q|\sqrt{\varepsilon}$ where
$Q$ is the set of \verifier's possible questions to a prover in $\Pi$. 
\end{lemma}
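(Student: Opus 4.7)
The plan is to convert a near-optimal entangled strategy for $\Pi'(x)$ into a classical strategy for $\Pi(x)$ by exploiting the consistency check on the extra prover. Suppose the three entangled provers share a state $|\psi\rangle_{ABC}$ and use POVMs $\{M_q^{(i),a}\}_a$ on question $q$ for prover $i\in\{1,2,3\}$, achieving success probability greater than $1-\varepsilon$ in $\Pi'(x)$. The acceptance event of $\Pi'$ decomposes into ``$(\prover_1,\prover_2)$ win $\Pi$'' AND ``$\prover_3$ reproduces the answer of whichever of $\prover_1,\prover_2$ it is emulating''. From the failure budget of $\varepsilon$ I would derive, for each question $q$, a consistency error $\delta_q := \sum_{a\neq b}\langle\psi|M_q^{(1),a}\otimes I\otimes M_q^{(3),b}|\psi\rangle$ satisfying $\sum_q \pi(q)\delta_q = O(\varepsilon)$ (and symmetrically for $\prover_2$ versus $\prover_3$), where $\pi$ is the marginal distribution of questions sent to $\prover_1$.

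From this pointwise near-equality between $\prover_1$'s and $\prover_3$'s measurements on identical questions I would argue, via the standard gentle-measurement / Fuchs--van de Graaf inequality, that measuring $\prover_3$'s system with question $q$ disturbs the reduced state on $AB$ by only $O(\sqrt{\delta_q})$ in trace distance. Sequentially simulating $\prover_3$'s measurement for every question $q\in Q$ therefore yields a single classical assignment $\lambda: Q \to \text{answers}$ whose marginal on any fixed question matches the quantum output distribution up to $O(\sqrt{\delta_q})$. This is essentially a finite-precision Fine-theorem step: approximate symmetric consistency between $\prover_1$ and $\prover_3$ is converted into an approximate local hidden variable model. The classical two-prover strategy for $\Pi$ is then the obvious one: both classical provers share a sample of $\lambda$ as their correlated randomness and respond to question $q$ with $\lambda(q)$. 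By the marginal agreement, this strategy wins $\Pi$ with probability at least the quantum strategy's success minus the total rounding cost; adding the $O(\sqrt{\delta_q})$ losses over the $|Q|$ questions and tracking the constants in the gentle-measurement bound yields the stated $12|Q|\sqrt{\varepsilon}$ slack on top of the original $\varepsilon$.

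The main obstacle is the middle step: assembling local, question-by-question consistencies into one global joint distribution over answers to \emph{all} possible questions at once. The naive sequential measurement of $\prover_3$'s POVMs is generally disturbing, and one must argue that the consistency with $\prover_1$ (and $\prover_2$) suppresses this disturbance enough that errors accumulate only linearly in $|Q|$ rather than catastrophically. This is precisely where the multiplicative $|Q|$ factor of the bound comes from, and where carefully chaining the $O(\sqrt{\delta_q})$ estimates across all $|Q|$ measurements (rather than squaring them at each step) earns the explicit constant $12$.
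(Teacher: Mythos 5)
This lemma is imported verbatim as Lemma~17 of Kempe, Kobayashi, Matsumoto, Toner and Vidick~\cite{doi:10.1137/090751293}; the paper states it without proof, so there is no internal argument to compare your sketch against. What you can be graded on is fidelity to the KKMTV proof itself. At the level of strategy you are on the right track: the third prover acts as an ``oracle'' pinned down by its consistency with each of the other two, and that consistency is used to extract an approximate local hidden variable $\lambda:Q\to\text{answers}$ which the two original provers share as correlated randomness. This matches the structure of the KKMTV argument.

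The technical core of your sketch is not sound as written, and the gap is exactly where you yourself flag discomfort. Invoking gentle measurement to claim that measuring $\prover_3$'s system on question $q$ perturbs the $AB$ reduced state by $O(\sqrt{\delta_q})$ conflates two different things. Gentle measurement gives low disturbance when some single outcome occurs with probability near $1$; here no single $\prover_3$ outcome is near-deterministic --- only its \emph{agreement} with $\prover_1$'s (or $\prover_2$'s) outcome on the same question is near-perfect. (Indeed, the $AB$ marginal is untouched by measuring and discarding $C$ for trivial no-signalling reasons, so that quantity cannot be what controls the error.) The quantity one actually has to control is the joint distribution over the full tuple $(\lambda(q))_{q\in Q}$ produced by measuring $\prover_3$'s POVMs for \emph{all} questions in some order, and the danger is that these POVMs fail to commute, so the tuple depends on the order and does not reproduce the game's pairwise correlations. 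The step that is missing from your write-up --- and that KKMTV supply --- is an approximate-commutation bound: two-sided consistency with the (non-communicating) provers $\prover_1$ and $\prover_2$ forces $\prover_3$'s measurements for different questions to commute up to $O(\sqrt{\varepsilon})$ in the state norm, and it is \emph{that} bound, summed over the $\binom{|Q|}{2}$ pairs or chained over $|Q|$ sequential measurements, which yields the linear-in-$|Q|$ loss. Without it, the $k$-th sequential measurement acts on a state already $O(k\sqrt{\varepsilon})$ away from the original, and the chain does not self-evidently close to an additive bound. You correctly identify this as ``the main obstacle''; as it stands the $12|Q|\sqrt{\varepsilon}$ is a target rather than a conclusion, because the commutator estimate that drives it is asserted (via the wrong lemma) rather than proved.
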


Lemma~\ref{values} remains true for non-symmetric two-prover one-round
protocol by  first making them symmetric at the cost of increasing
the size of $Q$. This is always possible
without changing the classical value of the game and by using twice the
number of questions $|Q|$ of the original game (Lemma~4 in \cite{doi:10.1137/090751293}).

Let $[\prover_1, \ldots, \prover_k, \verifier]$  be a $k$--prover
IP.  We denote by $\mathbf{view}(\prover_1, \ldots, \prover_k, \verifier, x)$ 
the probability distribution of \verifier's outgoing and incoming
messages with all provers according $\verifier$'s
coin tosses. 


\begin{definition}\label{DEF:standardZK}

Let $[\prover_1, \ldots, \prover_k, \verifier]$ be a $k$-prover interactive
proof system for $L$. 
We say that $[\prover_1, \ldots, \prover_k, \verifier]$ 
is \emph{perfect zero-knowledge}  if 
for all \ppt\ interactive Turing machines 
$\badp{\verifier}$
there exists a \ppt\
machine \simulator\ (i.e. the \emph{simulator}) having blackbox
access to $\badp{\verifier}$ such that  for all $x$,
\[\mathbf{view}(\prover_1, \ldots, \prover_k, \badp{\verifier}, x) = \simulator(x)\enspace,\]
and both random variables are equivalent.
In the following, we allow $\badp{\verifier}$ to be a quantum machine 
but our simulators will always be classical machines with blackbox access to $\badp{\verifier}$.
If the zero-knowledge condition holds against quantum $\badp{\verifier}$,
we say that the proof system is \emph{perfect zero-knowledge against quantum verifiers}. 
\end{definition}


%
%
%
%
%
%
%
%


\subsection{Multi-Prover Commitments with Implicit Unveiling}\label{commitment}
Our multi-prover proof systems for \textsf{3COL}
use a simple 2-committer commitment scheme with a
property allowing to guarantee perfect zero-knowledge.
In this section, we give the description of this simple commitment
scheme with its important  properties four our purposes.

Assume that provers $\prover_1$ and $\prover_2$ share 
$\ell$  values $c_1,c_2,\ldots,c_{\ell}\in \mathbb{F}$
where $\mathbb{F}$ is a finite set.
$\verifier$ wants to check that these values satisfy some properties
without revealing them all. Assume that $\mathbb{F}$ 
is a field with operations $+$ and $\cdot$.

Bit commitment schemes have been used in the multi-prover model ever since it was
 introduced in \cite{BGKW88}.
The original scheme was basically $w_{i} := b_{{i}} \cdot r_{i} + c_{{i}}$, 
a commitment $w_{i}$ to value $c_{{i}}\in \mathbb{F}$ 
using pre-agreed random mask $b_{{i}}\in_R \mathbb{F}$
and randomness ${r_{i}}\neq 0$ provided by \verifier.
Kilian\cite{KILIAN89} had a binary version where each bit $c_i:= c^{1}_i \oplus c^{2}_i \oplus c^{3}_i$ is shared among provers $\prover_{1}$ and $\prover_{2}$
(and therefore $\mathbb{F}$ needs only to be a group).
To commit $c_i$, \verifier\ samples  $c^{h}_i$ from $\prover_{1}$ and 
$c^{j}_i$ from $\prover_{2}$ at random.
If $j=h$ but $c^j_{i}\neq c^h_{i}$, 
\verifier\ immediately rejects the commitment. 
Otherwise either $\prover_{1}$ or $\prover_{2}$ may unveil 
by disclosing $c^{1}_i,c^{2}_i,c^{3}_i$ at a later time.
Somehow, bad recollection of \cite{BGKW88}'s scheme lead \cite{BCMS} to a similar but different scheme defining $w_{i} := c_{{i}} \cdot r_{i} + b_{{i}}$, 
a commitment $w_{i}$ to bit $c_{{i}}\in \{0,1\}$
using pre-agreed bit mask $b_{{i}}\in_R \{0,1\}$  and binary randomness 
${r_{i}}$ provided by their corresponding verifiers.
Although this form of commitment is  intimately  connected to the 
CHSH game \cite{CHSH} and the Popescu-Rohrlich box\cite{PR94},
this proximity is not relevant for the soundness and the completeness
of our  protocols, even against entangled provers. Although the 
(limited) binding property of these schemes has been established
in  \cite{PhysRevLett.83.1447,CSST11,PhysRevLett.117.140506,FF15,PhysRevLett.115.030502,CL17} against entangled provers,
we only use this commitment scheme against classical provers,
only sharing classical information before the execution
of the protocol. The weak binding property of  these schemes
against entangled provers
does not allow us to get sound and complete proof systems
against these provers.  We shall rather get completeness and
soundness against entangled provers 
using a different technique 
from \cite{doi:10.1137/090751293} that requires a third prover.

For an arbitrary field $\mathbb{F}$,  the commitment scheme produces
commitment
$w_{i} := c_{{i}} \cdot r_{i} + b_{{i}}$ to field element $c_{{i}}\in\mathbb{F}$
using pre-agreed field element mask $b_{{i}}$ (specific to value $1\leq i\leq \ell$) 
and random field element ${r_{i}}\neq 0$ provided by their corresponding verifiers.
Many results were proven for this specific form of the commitments. 
Notice however that the two versions discussed above,
$w_{i} := b_{{i}} \cdot r_{i} + c_{{i}}$ in the former case and $w_{i} := c_{{i}} \cdot r_{i} + b_{{i}}$ in the latter have equivalent binding property(left as a simple exercice).
Considering, the former as being the degree-one secret sharing \cite{DBLP:journals/cacm/Shamir79} of $c_{{i}}$ hidden in the degree zero term,
while the latter being the degree-one secret sharing of $c_{{i}}$ hidden in the degree one term,
we decided to use the former (original BGKW form) because all the known results about secret sharing are generally presented in this form. In particular, this form is more adapted to higher degree generalizations such as
$w_{i} := a_{{i}} \cdot r_{i}^{2} + b_{{i}} \cdot r_{i} + c_{{i}}$ being the degree-two secret sharing of $c_{{i}}$ hidden in the degree zero term, and so on.

Moreover, this choice turns out to simplify our (perfect) zero-knowledge simulator.
For the rest of this paper, we use $w_{i} := b_{{i}} \cdot r_{i} + c_{{i}}$
where $w_i, b_i, c_i\in \mathbb{F}_3$ and $r_i\in \mathbb{F}_3^*$. 
Provers therefore commit to trits,
one value for each node corresponding to its
colour in a $3$--colouring of graph $G=(V,E)$. The 
values shared between $\prover_1$ and $\prover_2$
are therefore, for each node $i\in V$, the colour $c_i$ of that node.

Suppose that \verifier\ asks $\prover_1$ to commit on the colour $c_i$
of node $i\in V$ using randomness $r \in_R \mathbb{F}^*_3$.
Let $w= b_i \cdot r + c_i$ be the commitment returned to \verifier\
 by $\prover_1$. Suppose \verifier\ asks $\prover_2$ to commit
 on the colour $c'_j$ of node $j\in V$ using randomness $r' \in _R\mathbb{F}^*_3$.
 Let $w' = b_j \cdot r' + c'_j$ be the commitment 
 issued to \verifier\ by $\prover_2$. The following
 3 cases are possible depending on \verifier's choices
 for $i,j,r$, and $r'$:
 \begin{enumerate}
\item \emph{(forever hiding)}  if $i\neq j$ then \verifier\ learns nothing on neither $c_i$ nor $c'_j$
 since $w$ and $w'$ hide $c_i$ and $c'_j$ with random and independent
 masks $b_i\cdot r$ and $b_j\cdot r'$ respectively. Even knowing 
 $r,r'\in \mathbb{F}_3^*$, $b_i\cdot r$ and $b_j\cdot r'$
 are uniformly distributed in $\mathbb{F}_3$.\\[-0.2cm]
\item \emph{(the consistency test)} If $i=j$ and $r=r'$ then \verifier\ can verify
that $w=w'$. This corresponds to the immediate rejection of \verifier\ in
Kilian's two-prover commitment described above. It allows \verifier\ to make sure
that $\prover_1$ and $\prover_2$ are consistant when asked to commit
on the same value.\\[-0.2cm]
\item \emph{(implicit unveiling)} If $i=j$ and $r' \neq r$ then \verifier\ can learn 
$c_i$ (assuming $w=b_i\cdot r + c_i$ and $w'=b_i\cdot r' + c_i$) the following way. 
\verifier\ simply computes $c_{i} := 2^{-1}\cdot(w+w')$
(Note that over an arbitrary field $c_{i} := (wr'\!-w'r)(r'\!-r)^{-1}$ whenever $r\neq r'$).  Interpreting the meaning 
of this test can be done when considering a strategy for $\prover_1$ 
and $\prover_2$ that always passes the consistency test. 
In this case, $w=b_i\cdot r + c_i$ and
$w'=b_i\cdot r' + c_i$ are satisfied and \verifier\ learns the committed value $c_i$.  
\end{enumerate}
As long as $\prover_1$ and $\prover_2$ are  {\em local} (or
{\em quantum non-local})
they cannot distinguish which option \verifier\ has picked among the three.
The consistency test makes sure that if $\prover_1$ and $\prover_2$
do not commit on identical values for  
some $1\leq i \leq \ell$ then \verifier\ 
will detect it when \verifier\ picks the  consistency test
for commitment  $w$ and $w'$  in position $i$.

%

\section{Classical Two-Prover Protocol}\label{SEC:PROTs}
First, consider a small variation over the protocol of Cleve et al.
presented in \cite{CHTW04}.
In their protocol, when $\prover_1$ and $\prover_2$ both know 
and act upon the same valid 3-colouring of $G$, \verifier\ asks each prover for the colour of a vertex in $G=(V,E)$.
Consistency is verified when \verifier\ asks the same vertex to each prover 
and compares that the same colour has been provided. The colorability is checked 
when the provers are asked for the colour of two connected vertices in $G$. 
This way of proceeding is however problematic for the  zero-knowledge 
condition. \verifier\ could be asking two nodes that do not form an edge
for which their respective colour will be unveiled. This  
certainly allows \verifier\ to learn something about $\prover_1$'s and $\prover_2$'s
colouring. Indeed, repeating this many times will allow  \verifier\ to efficiently reconstruct a complete colouring.
To remedy partially this problem, \verifier\ is instead 
asking each prover the colouring of an entire edge of $G$.
The colouring is (only) checked when both provers are asked
the same edge, while consistency
is checked when two intersecting edges are asked to the provers.

\subsection{Distribution of  questions}\label{distquestions}
Let $G=(V,E)$ be a connected undirected graph.
Let us define the probability distribution
$\mathcal{D}_G=\{(p(e,e'),(e, e'))\}_{e,e'\in E}$ for the pair $(e, e')\in E\times E$
that \verifier\ picks with probability $p(e,e')$ before announcing $e$ to $\prover_1$ and 
$e'$ to $\prover_2$.  
For $e,e'\in E$ such that 
$e\cap e'=0$, we set $p(e,e') :=0$ so that \verifier\ never asks 
two disconnected edges in $G$ (this would give no useful information).

The first thing to do is to pick
$e=(i,j)\in E$ uniformly at random. With probability $\epsilon$ (to be selected later),
we set $e'=e$, which allows  for an edge-verification test. 
With probability $1-\epsilon$, we perform a well-definition test
as follows.
With probability $\frac{1}{2}$, $e'\in \text{\sf{Edges}}(i)$ 
uniformly at random and with probability $\frac{1}{2}$, 
 $e'\in \text{\sf{Edges}}(j)$ 
uniformly at random. 
In other words, the well-definition test picks
the second edge $e'$ with probability $\frac{1}{2}$
among the edges connecting $i\in V$ and with probability $\frac{1}{2}$
among the edges connecting $j\in V$.
It follows that  for $e' \in \mathsf{Edges}(i)\cup \mathsf{Edges}(j)$ with $e\neq e'$,
we have, for $e=(i,j)\in E$,
\begin{equation}\label{bpee}
p(e,e') = \frac{1-\epsilon}{2|E|}\left( \frac{|\{e'\}\cap \text{\sf{Edges}}(i)|}{|\text{\sf{Edges}}(i)|}+\frac{|\{e'\}\cap \text{\sf{Edges}}(j)|}{|\text{\sf{Edges}}(j)|}\right)\enspace. 
\end{equation} 
We also get
\begin{equation}\label{bpee2}
p(e,e)= \frac{\epsilon}{|E|} + 
\frac{1-\epsilon}{2|E|}\left( \frac{1}{|\text{\sf{Edges}}(i)|}+\frac{1}{|\text{\sf{Edges}}(j)|}\right) \geq \frac{\epsilon}{|E|}\enspace.
\end{equation}

It is easy to verify that $\mathcal{D}_G$
is a properly defined probability distribution over pairs
of edges.

\subsection{A Variant Over the Two-Prover Protocol of Cleve et al.}
Distribution $\mathcal{D}_G$ produces two edges where
the first one is provided to $\prover_1$ while the second 
one is provided to $\prover_2$. Each prover then returns
the colour of each node of the edge to \verifier.
We denote the resulting protocol
\twopstd.\\\\

\begin{quote}\label{MIP_3COL}
\rule{\linewidth}{1pt}
\prot{\,\,{\twopstd}[G]}{\,\,Two-prover, 3-COL.}
Provers $\prover_{1},\prover_{2}$ 
pre-agree on a random 3-colouring of $G$: $\left\{ (i,c_{i}) | c_i \in \
\mathbb{F}_3 \right\}_{i\in V}$ 
such that $({i},{j}) \inn E \implies c_{{j}} \neq c_{{i}}$.

{\bf Interrogation phase:}

\begin{itemize}

\item $\verifier$ picks $((i,j), (i'\!,j'))\in_{\mathcal{D}_G} E\times E$, sends
 $(i,j)$ to $\prover_1$ and $(i'\!,j')$ to $\prover_2$.
\item If $(i,j) \inn E$ then $\prover_1$ replies with $c_{i},c_{j}$.
\item If $(i'\!,j') \inn E$ then $\prover_2$ replies with $c_{i'},c_{j'}$.

\end{itemize}

{\bf Check phase:}

\begin{itemize}
\item 
{\bf Edge-Verification Test:}\\ if $(i,j) = (i',j')$ then $\verifier$ accepts iff  $c_{i} = c_{i'} \neq c_{j'} = c_{j}$.

\item
{\bf Well-Definition Test:}\\ if $(i,j)\cap (i',j')=h\in V$  then 
\verifier\ accepts iff $c_{h} = c'_{h}$.

\end{itemize}

\rule{\linewidth}{1pt}
\end{quote}

The perfect soundness of this protocol is not difficult to establish along
the same lines of the proof of soundness for the original protocol in \cite{CHTW04}. 
On the other hand,
zero-knowledge does not even hold against honest verifiers. 
\verifier\ learns the colour of each node  
contained in any two  edges of $G$. This is certainly information
about the colouring that \verifier\ learns after the interaction.
To some extend, the modifications we applied to the 
2-prover interactive proof system of \cite{CHTW04}
leaks even more to \verifier. In the next section, we show
that the 2-prover commitment scheme, that we introduced in 
Sect.~\ref{commitment}, can be used 
in protocol $\twopstd$  to prevent this leakage completely.

\section{Perfect Zero-Knowledge Two-Prover Protocol}
\label{MIP_3COLZK}

We modify the protocol of section \ref{MIP_3COL} to prevent \verifier\ 
from learning the colours of more than two connected
nodes in $G$. The idea is simple, $\prover_1$ and
$\prover_2$ will return commitments for the colours
of the nodes asked by \verifier. 
The implicit unveiling of the commitment
scheme described in section \ref{commitment}
will allow \verifier\ to perform both the edge-verification
and well-definition tests in a very similar way that
in protocol $\Pi^{(2)}_{\text{std}}$. The commitments
require \verifier\ to provide a random nonzero trit for each
node of the edge requested to a prover. 

\subsection{Distribution of questions}
We now
define the probability distribution 
$\mathcal{D}'_G$ for  \verifier's questions 
in protocol $\twopcl[G]$ defined in the following section.
It  consists in one edge and two nonzero trits for each prover:
\[\mathcal{D}'_G=\{(p'(e,r,s,e'\!,r'\!,s'),((e,r,s), (e'\!,r'\!,s'))\}_{e,e'\in E, 
r,s,r'\!,s'\in\mathbb{F}^*_3}\]
upon graph $G=(V,E)$ and where $(e,r,s)$ is the question to $\prover_1$
and $(e'\!,r'\!,s')$ is the question to $\prover_2$. 
$\mathcal{D}'_G$ is easily derived from the distribution 
$\mathcal{D}_G=\{(p(e,e'),(e,e'))\}_{e,e'\in E}$ for the questions in ${\twopstd}[G]$,
as defined in section \ref{distquestions}.
First, an edge $e\in_R E$
is picked uniformly at random. Together with $e$, two nonzero trits $r,s\in_R \mathbb{F}^*_3$ 
are picked at random. Then, as in $\mathcal{D}_G$, with probability ${\epsilon}$ (to be selected later) the second 
edge $e'=e$, in which case
we always set $r'=-r$ and $s'=-s$.
This case allows 
for an edge-verification test. 
Finally, with probability $1-\epsilon$,
we pick $e'$ with probability $p(e,e')$  and 
pick $r'\!,s'\in_R\mathbb{F}^*_3$ so that
the couple $((e,r,s),(e'\!,r'\!,s'))$ is produced with probability
$\frac{1}{16}p(e,e')$
for all $e,e'\in E$, and $r,s,r'\!,s'\in \mathbb {F}^*_3$.
This will allow for a well-definition test.
A consequence of (\ref{bpee}) is that for $e=(i,j)\in E$,
$e'\in  \mathsf{Edges}(i)\cup  \mathsf{Edges}(j)$ with $e\neq e'$,
\begin{equation}\label{bbpee}
p'(e,r,s,e'\!,r'\!,s') \geq\frac{1-\epsilon}{16|E|}\left( \frac{|\{e'\}\cap \text{\sf{Edges}}(i)|}{|\text{\sf{Edges}}(i)|}+\frac{|\{e'\}\cap \text{\sf{Edges}}(j)|}{|\text{\sf{Edges}}(j)|}\right)\enspace.
\end{equation}
According to (\ref{bpee2}), we also get  
\begin{equation}\label{bbpee2}
p'(e,r,s,e,r,s) = \frac{p(e,e)}{4} \geq \frac{\epsilon}{4|E|}\enspace.
\end{equation}
 It is easy to verify that $\mathcal{D}'_{G}$ is a properly defined probability
 distribution.  

 \subsection{The  Protocol}
The protocol is similar to $\twopstd$ except that instead of returning
to \verifier\ the colour for each node of an edge in $G$,
each prover returns commitments with implicit unveilings of these colours.
If \verifier\ asks two disjoint edges then \verifier\ learns 
nothing about the values committed by the \emph{forever-hiding}
property of the commitment scheme. The resulting $2$--prover one-round interactive 
 proof system is denoted $\twopcl$. 
 
\begin{quote}\label{MIP_3COLZK}
\rule{\linewidth}{1pt}
\prot{\,\,\twopcl[G]}{Two-prover, 3-COL} 
$\prover_{1}$ and $\prover_{2}$ pre-agree on random masks $b_{{i}}\in_R \mathbb{F}_3$
for each  ${i} \in V$ 
and a random 3-colouring of $G$: $\left\{ (i,c_{i}) | c_i \in \mathbb{F}_3\right\}_{i\in V}$ 
such that $({i},{j}) \inn E \implies c_{{j}} \neq c_{{i}}$.

{\bf Commit phase:}

\begin{itemize}

\item $\verifier$ picks $(((i,j),r, s ), ((i'\!,j'), r'\!, s'))\in_{\mathcal{D}'_G} \left(E\times (\mathbb{F}^*_3)^2\right)^2$, sends
 $((i,j),r,s)$ to $\prover_1$ and $((i'\!,j'),r'\!,s')$ to $\prover_2$.
 
\item If $(i,j) \in E$ then $\prover_1$ replies $w_i = b_i\cdot r + c_i $ and $w_{j}
= b_j\cdot s + c_j$.
\item If $(i'\!,j') \in E$ then $\prover_2$ replies $w'_{i'} = b_{i'} \cdot r' + c_{i'} $ and $w'_{j'}
= b_{j'} \cdot s' + c_{j'}$.

\end{itemize}

{\bf Check phase:}

\begin{itemize}
\item[] {\bf Edge-Verification Test:}
\item if $(i,j) = (i'\!,j')$ and $(r'\!, s') \nneq (r, s)$ then \verifier\
 accept iff $w_{i}+w'_{i}\neq w_{j}+w'_{j}$.
\item[] {\bf Well-Definition Test:}
\item If $(i,j)=(i'\!,j')$ and $\neg\left((r'\!, s') \nneq (r, s)\right)$ then \verifier\ accepts iff 
$((w_i = w'_i)\vee (r\neq r'))\wedge ((w_j = w'_j)\vee (s\neq s'))$.
\item if $(i,j)\cap (i'\!,j')=i$ and $r'=r$ then \verifier\ accepts iff $w_{i} = w'_{i}$.
\item If $(i,j)\cap (i'\!,j')=j$ and $s'=s$ then \verifier\ accepts iff $w_j = w'_j$.
\end{itemize}

\rule{\linewidth}{1pt}
\end{quote}

Clearly, \twopcl\ satisfies perfect completeness.
The following theorem establishes that in addition
to perfect completeness, 
$\twopcl$ is sound against classical provers.

\begin{theorem}\label{soundnesscl}
The two-prover interactive proof system
\twopcl\   is perfectly complete with classical
value
$\omega(\twopcl[G]) \leq 1-\frac{1}{12\cdot |E|}$ 
upon any graph $G=(V,E)\notin\mathsf{3COL}$.
\end{theorem}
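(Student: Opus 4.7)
The plan is extraction-and-contradiction. By convexity of the acceptance probability in the provers' shared randomness it suffices to treat deterministic strategies; fix such a pair $(\pi_1,\pi_2)$ and, for each vertex $h$, edge $e\in\mathsf{Edges}(h)$ and $(r,s)\in(\mathbb{F}_3^*)^2$, write $A_h(e,r,s)$ and $A'_h(e,r,s)$ for the commitments returned at the position of $h$ in $e$ by $\pi_1$ and $\pi_2$ respectively. The edge-verification test on the pair $((e=(i,j),r,s),(e,-r,-s))$ accepts iff the implicitly unveiled values
\[
\hat c(h;e,r,s)\;:=\;2^{-1}\bigl(A_h(e,r,s)+A'_h(e,-r,-s)\bigr)
\]
differ for $h=i,j$. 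A single such query carries $\mathcal{D}'_G$-weight at least $\epsilon/(4|E|)$ (just from the $\epsilon$-branch of $\mathcal{D}'_G$), which upon setting $\epsilon:=1/3$ equals the target $1/(12|E|)$.

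The well-definition tests are engineered precisely to make $\hat c$ depend only on $h$. The same-edge well-def tests force $A_h(e,r,s)$ and $A'_h(e,r,s)$ to be independent of the randomness on the \emph{other} vertex of $e$ and to agree with each other, yielding a function $X^{(e)}(h,r)\in\mathbb{F}_3$; the intersecting-edge well-def tests then force $X^{(e)}(h,r)=X^{(e')}(h,r)$ whenever $h\in e\cap e'$ with matching randomness on $h$, producing a single $X(h,r)$. Thus, if every well-definition test is passed with conditional probability $1$, the candidate colouring $c(h):=X(h,1)+X(h,2)$ is a well-defined map $V\to\mathbb{F}_3$ and the edge-verification test on $(i,j)$ accepts iff $c(i)\neq c(j)$; since $G\notin\mathsf{3COL}$, $c$ must collide on some edge $e^*$, and then each edge-verification query on $e^*$ rejects deterministically. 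One such rejection suffices to give failure weight $\geq 1/(12|E|)$.

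The complementary regime is when some well-definition test fails with conditional probability $1$. A failing \emph{same-edge} well-def query already has weight $\geq \epsilon/(4|E|)=1/(12|E|)$ by~(\ref{bbpee2}), so the crux is a failing \emph{intersecting-edge} well-def query, whose individual weight from~(\ref{bbpee}) may be as small as $\Theta(1/(|E|\cdot|\mathsf{Edges}(h)|))$. The plan is to aggregate: the inconsistency $X^{(e)}(h,r)\neq X^{(e')}(h,r)$ forces every intersecting-edge well-def query between $e$ and $e'$ with matching randomness on $h$ to reject (multiple such queries, obtained by varying the irrelevant randomness at the other endpoints), and the resulting combined weight meets the $1/(12|E|)$ threshold. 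The main obstacle is the case analysis over the degree of $h$ and the multiplicity of the $X^{(\cdot)}(h,r)$ values; the calculation is tight at low-degree vertices, and the choice $\epsilon=1/3$ is made to balance this aggregated contribution against the edge-verification weight $\epsilon/(4|E|)$.
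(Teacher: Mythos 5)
Your proof has the right skeleton and matches the paper up to the point where you identify the crux: deterministic strategies, the implicit-unveiling map $\hat c$, the observation that passing all well-definition tests forces a single function $X(h,r)$ and hence a colouring $c$, a collision on some edge $e^*$ because $G\notin\mathsf{3COL}$, the edge-verification weight $\epsilon/(4|E|)$, and $\epsilon=1/3$ as the balancing choice. The gap is exactly where you yourself flag it: the aggregation for a failing intersecting-edge well-definition test. You propose collecting the failing $(e,e')$ queries ``obtained by varying the irrelevant randomness at the other endpoints,'' which is only a constant-factor gain (at most $8$ ordered queries), so the bound remains $\Theta\bigl(1/(|E|\cdot|\mathsf{Edges}(h)|)\bigr)$ and does \emph{not} reach $1/(12|E|)$ at high-degree vertices. (Your remark that the calculation is tight at low degree has it backwards: low-degree vertices already give $\Theta(1/|E|)$ for a single query; the problem is high degree.)

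The paper's step that fills this hole is an aggregation over \emph{edges}, not over randomness. Fixing $(i,r)$ with a witness $W^1_1[i,r,j,s]\neq W^2_1[i,r,k,t]$, the paper shows that for \emph{every} $m$ with $(i,m)\in\mathsf{Edges}(i)$ and every $u\in\mathbb{F}_3^*$, at least one of three events must hold: $(i,m)$--$\pi_1$ disagrees with $(i,k)$--$\pi_2$; or $(i,j)$--$\pi_1$ disagrees with $(i,m)$--$\pi_2$; or $\pi_1$ and $\pi_2$ disagree on $(i,m)$ itself (a same-edge failure). Each event singles out a distinct rejecting query, yielding $2|\mathsf{Edges}(i)|$ rejecting queries each of weight at least $\tfrac{1-\epsilon}{16|E|\,|\mathsf{Edges}(i)|}$, so the degrees cancel and $\delta_{\mathrm{wdt}}\geq\tfrac{1-\epsilon}{8|E|}$. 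That degree-independent lower bound is what makes the $\min(\delta_{\mathrm{wdt}},\delta_{\mathrm{evt}})\geq 1/(12|E|)$ conclusion work at $\epsilon=1/3$. Your write-up correctly names this as ``the main obstacle'' and correctly intuits that the multiplicity of the $X^{(\cdot)}(h,r)$ values matters, but leaves it unargued; as stated, the bound on the well-definition branch does not follow.
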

\begin{proof}
Perfect completeness is obvious. Assume $G\notin \mathsf{3COL}$ 
and let us consider the probability
$\delta$ that \verifier\ detects an error in the check phase
when interacting with two local dishonest provers
$\tilde{\prover}_1$ and $\tilde{\prover}_2$. 
\twopcl\ is a one-round protocol
where the provers cannot communicate directly with each other nor through
\verifier's questions since they are independent of the provers'
answers. It follows that the strategy of $\tilde{\prover}_1$ and $\tilde{\prover}_2$
can be made deterministic without damaging the soundness
error by letting each prover choosing the answer that maximizes
her/his probability of success given her/his question. Therefore,
consider a deterministic strategy as a pair of arrays $W^{\ell}[{i},r,{j},s]\in\mathbb{F}^2_3$ to be 
used by prover $\tilde{\prover}_{\ell}$ 
for $\ell\inn \{1,2\}$ (i.e. we only care about the entries where $({i},{j}) \inn E$
upon question $((i,j),r,s)$). 
For $z\in \{1,2\}$, $W_{z}^{\ell}[\cdot,\cdot,\cdot,\cdot]$ 
 is the $z$-th component 
 of the output pair $W^{\ell}[\cdot,\cdot,\cdot,\cdot]$.
 We let
$W^{\ell}_1(i,r,j,s)=W^{\ell}_2(j,s,i,r)$, as the order in which the vertices
of an edge are given to a prover is irrelevant ($V$ can always choose the same order).
We say that $W[i,r]$ for $[i,r] \in E\times \mathbb{F}^*_3$
is \emph{well defined} if  
for all $j,k$ such that $({i},{j}), ({i},{k}) \inn \text{\sf{Edges}}(i)\neq \emptyset$ 
and $\forall s,t \in \mathbb{F}^*_3$,
\begin{equation}\label{longegal}
W_{1}^{1}[{i},r,{j},s]
=  W_{1}^{2}[{i},r,{k},t]\enspace.
\end{equation}
For $W[i,r]$ well defined, we set $W[i,r]:=W_{1}^{1}[{i},r,{j},1]$
for an arbitrary $j$ such that $(i,j)\in \text{\sf{Edges}}(i)$.

We now lower bound the probability $\delta_{\text{wdt}}>0$
that, when $W[{i},r]$ is not well-defined for some ${i}\in V$ and $r\in \mathbb{F}^*_3$, 
the well-definition test will detect it. 
When (\ref{longegal}) is not satisfied , we have
$W_{1}^{1}[{i},r,{j},s] \neq W_{1}^{2}[{i},r,k,t]$
for some $(i,j), (i,k)\in \text{\sf{Edges}}(i)$.
Let $e=(i,j)$ and $e'=(i,k)$ be these two edges. 
According to (\ref{bbpee}) (and (\ref{bpee}) when $e=e'$),
the well-definition test will then detect an error 
with probability 
\begin{equation}\label{ppp}
\Pr{\left(\text{\verifier\ picks $e$ and $e'$ with randmoness $r,s,t$} \right)}
= p'(e,r,s,e'\!,r,t) \geq  \frac{1-\epsilon}{16\cdot |E||\text{\sf{Edges}}(i)|}   \enspace.
\end{equation}
We can do much better.
Consider 
$W_1^1[i,r,m,u], W^2_1[i,r,m,u]$ for $(i,m) \in {\sf Edges}(i)$ and $u\in\mathbb{F}^*_3$.
For $i\in V$ and and value $r\in\mathbb{F}^*_3$ fixed, three cases can happen:
\begin{enumerate}
\item $W^{1}_1[i,r,m,u] \neq W_1^2[i,r,k,t]$, in which case $e=(i,m)$ and $e'=(i,k)$ 
are incompatibe for values $u$ and $t$, or
\item $ W_1^1[i,r,j,s]\neq W^2_1[i,r,m,u]$, in which case $e=(i,j)$ and $e'=(i,m)$
are incompatible for values $s$ and $u$, or
\item $W^{1}_1[i,r,m,u] = W_1^2[i,r,k,t]$ and  $W^2_1[i,r,m,u] = W_1^1[i,r,j,s]$, in which 
case $W^{1}_1[i,r,m,u] \neq W^2_1[i,r,m,u]$ and $e=e'=(i,m)$ are incompatible 
for value $u$ on both sides.
\end{enumerate}
In other words, if $(i,j),(i,k)\in \text{\sf{Edges}}(i)$ are such that
$W_{1}^{1}[{i},r,{j},s] \neq W_{1}^{2}[{i},r,k,t]$ then for any $(i,m)\in  \text{\sf{Edges}}(i)$
and for any randomness $u\in\mathbb{F}^*_3$ associated to node $m$, 
\verifier\ catches the provers with probability expressed on the right hand side
of (\ref{ppp}).
It follows that if $W[i,r]$ is not well defined then there are $2\cdot|\mathsf{Edges}(i)|$
ways for \verifier\ to 
catch the provers  and each of these has probability at least
$\frac{1-\epsilon}{16\cdot|E|\cdot|\mathsf{Edges}(i)|}$ to be picked.
It follows that,
\begin{align*}
\delta_{\text{wdt}} &\geq  
{\frac{2 (1-\epsilon)\cdot |\mathsf{Edges}(i)|}{16\cdot |E|\cdot 
|\mathsf{Edges}(i)|} = \frac{1-\epsilon}{8\cdot |E|}\enspace.}
\end{align*}

%
%
%

Now, assume that for all ${i}\in E$ and $r\in \mathbb{F}_3^*$, 
$W[{i},r]$ is well-defined, which means that 
the commitment values produced by the provers 
satisfy the consistency test. 
As discussed in section \ref{commitment}, 
when the commitments are consistent,    
the unique values committed upon are defined by
$c_{{i}} := 2^{-1}\cdot\left(W[{i},r]+W[{i},-r]\right)$.
Since $G\notin \mathsf{3COL}$,
 two of the nodes must be of the same colour at the end-points 
 of at least one edge $(i^*,j^*)\in E$.
  In this case the edge-verification test will detect it when
 $(i^*,j^*)$ is the edge announced to both provers and if randomness
 $(r,s)\in \mathbb{F}_3^*\times \mathbb{F}_3^*$ 
 is announced to $\prover_1$ then $(-r,-s)$ is the randomness
 announced to $\tilde{\prover}_2$.  Using (\ref{bbpee2}), the probability  $\delta_{\text{evt}}$ to 
 detect such an edge when $W[i,r]$ is well 
 defined for all $i\in V$ and $r\in\mathbb{F}_3^*$ satisfies
 \[   \delta_{\text{evt}} \geq \min_{e\in E}( p'(e,r,s,e,r,s)) \geq \frac{\epsilon}{4\cdot |E|} \enspace.
 \]
 Therefore, the detection probability $\delta$ of any deterministic strategy 
 for $G\notin \mathsf{3COL}$ satisfies
\[~~~~~~~~~~~~~~~~~~~~~~~~~~~~\delta \geq \min(\delta_{\text{wdt}}, \delta_{\text{evt}})
 \geq \frac{1}{12\cdot |E|}  ~~\text{(maximized at $\epsilon=1/3$)}\enspace.\text{~~~~~~~~~~~~~~~~~~~~~~~~~~}
\]
The result follows as the classical value of the game $\omega(\twopcl[G])\leq 1-\delta$.\qed
\end{proof}

To prove (perfect) zero-knowledge, it suffices to show that if 
$((i,j),r,s)$ and $((i'\!,j'),r'\!,s')$ are selected arbitrarily, 
\verifier\
 can determine at most the colours of two nodes (that form an edge).
 The commitments prevent a dishonest prover $\widetilde{\verifier}$ 
 to learn the colours of two nodes that are not connected by an edge in $G$.
Proving this is not very hard and will be done in Section \ref{PofZK} 
for the three-prover case (although with three provers, $\widetilde{\verifier}$ may also learn
the colour of three nodes that form a triangle). The addition of a third
prover will allow, using lemma~\ref{values}, 
 to get soundness against entangled provers
without compromising zero-knowledge.
As shown in \cite{CHTW04}, their protocol is not necessarily sound against two entangled provers.
We also do not know whether  \twopstd\ is sound against two entangled provers.
%

\section{Three-Prover Protocol Sound Against Entangled Provers}
\label{MIP_3COL3s}

The three-prover protocol 
\threepq, defined below,
is identical to \twopcl\  except that $\prover_3$ 
is asked to repeat exactly what $\prover_{1}$ or $\prover_{2}$
has replied. The prover that $\prover_3$ is asked to emulate
is picked at random by \verifier. An application of lemma~\ref{values}
allows to conclude the soundness of \threepq\ against 
entangled provers. Zero-knowledge remains since the only
way to provide  \verifier\ with the colours of more than two connected nodes 
is if they form a complete triangle of $G$. This reveals
nothing beyond the fact that $G\in \mathsf{3COL}$
to \verifier, since all nodes will then show different 
colours.
 
\subsection{Distribution of questions}
The probability distribution
$\mathcal{D}''_G$ for \verifier's questions to the three provers
is easily obtained from the distribution $\mathcal{D}'_G$ for the questions
in protocol $\twopcl[G]$. 
\verifier\ picks $((e,r,s),(e'\!,r'\!,s'))\in_{\mathcal{D}'_G} 
\left(E\times (\mathbb{F}^*_3)^2\right)^2$
and sets $e''=e$, $r''=r$, and $s''=s$ with probability $\frac{1}{2}$
or sets $e''=e'$, $r''=r'$, and $s''=s'$ also with probability $\frac{1}{2}$. 
Defined that way, 
 $\mathcal{D}''_G$ is a properly defined probability distribution
 for \verifier's three questions, each one in  $E\times (\mathbb{F}^*_3)^2$.
 
%
\pagebreak
\subsection{The Protocol}
\begin{quote}\label{MIP_3COL3} 
\rule{\linewidth}{1pt}
\prot{\,\,\threepq[G]}{Three-prover, 3-COL.}
Provers $\prover_{1},\prover_{2}$, and $\prover_{3}$ 
pre-agree on random values $b_{{i}}\in_R \mathbb{F}_3$ for all ${i} \in V$ 
and a random 3-colouring of $G$: $\left\{ (i,c_{i}) | c_i \in \{0,1,2\} \right\}_{i\in V}$ 
such that $({i},{j}) \inn E \implies c_{{j}} \neq c_{{i}}$.
%

{\bf Commit phase:}

\begin{itemize}
\item $\verifier$ picks $(((i,j),r, s ), ((i'\!,j'), r'\!, s'), ((i''\!, j''),r''\!,s''))\in_{\mathcal{D}''_G} \left(E\times (\mathbb{F}^*_3)^2\right)^3$, sends
 $((i,j),r,s)$ to $\prover_1$, sends $((i'\!,j'),r'\!,s')$ to $\prover_2$, and sends 
 $((i''\!,j''), r''\!,s'')$ to $\prover_3$.

\item If $(i,j)\in E$  then $\prover_1$ replies $w_i = b_i\cdot r+c_i$ and
$w_j = b_j\cdot s + c_j$.

\item If $(i'\!,j')\in E$ then $\prover_2$ replies $w'_{i'} = b_{i'} \cdot r'+c_{i'}$ and
$w'_{j'} = b_{j'} \cdot s' + c_{j'}$.

\item If $(i''\!,j'')\in E$ then $\prover_3$ replies $w''_{i''} = b_{i''} \cdot r''+c_{i''}$ and
$w''_{j''} = b_{j''} \cdot s'' + c_{j''}$.

\end{itemize}

{\bf Check phase:}

\begin{itemize}

\item[] {\bf Consistency Test:}
\item  If $((i''\!,j''),r''\!,s'') = ((i,j),r,s)$ then \verifier\ rejects if $(w_{i},w_{j}) \neq (w''_{i''},w''_{j''})$.
\item  If $((i''\!,j''),r''\!,s'') = ((i'\!,j'),r'\!,s')$ then \verifier\ rejects if $(w'_{i'},w'_{j'}) \neq (w''_{i''},w''_{j''})$.

\item[] {\bf Edge-Verification Test:}
\item if $(i,j) = (i'\!,j')$ and $(r'\!, s')\nneq (r, s)$ then \verifier\
 accept iff $w_{i}+w'_{i} \neq w_{j}+w'_{j}$.
\item[] {\bf Well-Definition Test:}
\item if $(i,j)\cap (i'\!,j')=i$ and $r=r'$ then \verifier\ accepts iff $w_{i} = w'_{i}$.
\item If $(i,j)\cap (i'\!,j')=j$ and $s=s'$ then \verifier\ accepts iff $w_j = w'_j$.
\end{itemize}

\rule{\linewidth}{1pt}
\end{quote}
In protocol \threepq, after the three questions picked according $\mathcal{D}''_G$
by \verifier\ have been answered by the the provers,
$\verifier$ accepts if and only if the replies of $\prover_1$ and $\prover_2$ 
are accepted in  $\twopcl$  and in addition, $\prover_3$ 
gave the same reply than the prover it emulates.

The soundness of protocol $\threepq$ against entangled provers
can easily be shown a direct consequence of the soundness
of protocol $\twopcl$ against classical provers, by an application of Lemma~\ref{values}.
Indeed, the soundness error corresponds to the quantum value
of the game when $G \notin \mathsf{3COL}$ and $\twopcl$ is obviously symmetric.   

\begin{theorem}\label{t2}
The three-prover interactive proof system
$\threepq$ is perfectly complete and has quantum value
\begin{equation}\label{delta} 
\omega^*(\threepq[G])  \leq 1-\left(  \frac{1}{25|E|}\right)^4
\end{equation}
upon any graph $G=(V,E)\notin \mathsf{3COL}$.
\end{theorem}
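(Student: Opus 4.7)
The plan is to handle perfect completeness by direct inspection and to derive the entangled-prover soundness bound as a direct corollary of the classical soundness of $\twopcl$ (Theorem~\ref{soundnesscl}) via one application of Lemma~\ref{values}. For completeness, when the honest provers act on a shared valid 3-colouring $\{c_i\}_{i\in V}$ and shared random masks $\{b_i\}_{i\in V}$ over $\mathbb{F}_3$, the Consistency Test holds because each prover computes $w_i = b_i \cdot r + c_i$ deterministically from identical inputs; the Well-Definition Test holds on the intersecting vertex for the same reason; and the Edge-Verification Test holds because
\[
(w_i + w'_i) - (w_j + w'_j) \;=\; b_i(r+(-r)) + 2c_i - b_j(s+(-s)) - 2c_j \;=\; 2(c_i - c_j) \;\not\equiv\; 0 \pmod{3}
\]
whenever $(i,j) \in E$.

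For soundness, the crucial observation is that $\threepq[G]$ is exactly the modified three-prover game $\Pi'(G)$ obtained from $\Pi(G) := \twopcl[G]$ by the recipe preceding Lemma~\ref{values}: draw a question pair from $\mathcal{D}'_G$, forward one of the two questions (chosen uniformly) to $\prover_3$, and accept iff the underlying two-prover game accepts \emph{and} $\prover_3$ matches the emulated prover's answer (the Consistency Test). Before invoking the lemma I would verify that $\twopcl$ is symmetric: inspection of $\mathcal{D}_G$ gives $p(e,e') = p(e',e)$, the trit randomness in $\mathcal{D}'_G$ is uniform, the edge-verification branch fixes $(r'\!,s')=(-r,-s)$ which is swap-symmetric, and every clause of the Check Phase is visibly invariant under exchanging primed and unprimed variables. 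The single-prover question set is $Q = E \times (\mathbb{F}_3^*)^2$, so $|Q| = 4|E|$.

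Finally I would apply the contrapositive of Lemma~\ref{values}: if $\omega^*(\threepq[G]) > 1 - \varepsilon$ then $\omega(\twopcl[G]) > 1 - \varepsilon - 12|Q|\sqrt{\varepsilon} = 1 - \varepsilon - 48|E|\sqrt{\varepsilon}$. Setting $\varepsilon := (25|E|)^{-4}$ gives $\sqrt{\varepsilon} = (625|E|^2)^{-1}$ and hence $48|E|\sqrt{\varepsilon} = \tfrac{48}{625|E|}$. Since $48 \cdot 12 = 576 < 625$, one has $\tfrac{48}{625|E|} < \tfrac{1}{12|E|}$ with a margin of $\tfrac{49}{7500|E|}$, which easily absorbs the negligible additive $\varepsilon = (25|E|)^{-4}$ for every $|E| \geq 1$. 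The strict inequality $\omega(\twopcl[G]) > 1 - \tfrac{1}{12|E|}$ then contradicts Theorem~\ref{soundnesscl}, forcing $\omega^*(\threepq[G]) \leq 1 - (25|E|)^{-4}$. The only real content here is this arithmetic check — the constant $25$ is chosen conservatively so the quadratic-in-$\sqrt{\varepsilon}$ term fits strictly inside the classical soundness gap; all of the heavy lifting lives inside Lemma~\ref{values} and Theorem~\ref{soundnesscl}, so I anticipate no further technical obstacle.
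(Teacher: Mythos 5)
Your proposal is correct and follows essentially the same route as the paper's own proof: invoke the contrapositive of Lemma~\ref{values} with $|Q|=4|E|$, use the classical soundness gap $\tfrac{1}{12|E|}$ from Theorem~\ref{soundnesscl}, and verify arithmetically that $\varepsilon=(25|E|)^{-4}$ is small enough. The paper instead solves $\delta+12|Q|\sqrt{\delta}\geq\tfrac{1}{12|E|}$ for $\sqrt{\delta}\geq\tfrac{1}{588|E|^2}$ and notes $588^2<25^4$, but this is only a cosmetic difference in how the same inequality is checked; your more explicit verification of symmetry and of the matching between $\threepq$ and the ``modified game'' construction is a minor improvement in rigor over the paper's terse assertion.
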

\begin{proof}
Assume $G=(V,E)\notin \mathsf{3COL}$.
The contrapositive of Lemma~\ref{values} indicates 
any one-round symmetric  game $\twopcl[G]$ with classical value 
$\omega(\twopcl[G])\leq 1-\delta-12|Q|\sqrt{\delta}$ is such that
the modified game
$\threepq[G]$  has quantum value
 $\omega^*(\threepq[G])\leq 1-\delta$. The set $Q$ of questions to each player 
 satisfies $|Q|=4|E|$.
Theorem~\ref{soundnesscl} establishes that $\delta + 12|Q|\sqrt{\delta}\geq \frac{1}{12 |E|}$, 
 which implies $\sqrt{\delta} \geq \frac{1}{(1+12|Q|)\cdot 12 \cdot |E|}=\frac{1}{12|E|+576 |E|^2}
 \geq \frac{1}{588|E|^2}$,
 and the result follows.
 %
%
%
\qed
\end{proof}

As an immediate consequence of Theorem~\ref{t2}, 
$\Omega(|E|^4)$ sequential repetitions of $\threepq$ produces
an  interactive proof system for $\mathsf{3COL}$  with negligible
soundness error. Although the resulting proof system can be 
implemented on short distances,
these many sequential 
communication rounds need to be performed at high rate for a given proof
to be concluded in reasonable time. A few executions of 
$\threepq$ could be run in parallel  without having to increase 
(significantly) the distances while reducing the number of sequential
rounds. However, we don't know how the soundness error decreases when
$\threepq$ is run only a few times in parallel, even though the results
of Kempe and Vidick, a quantum version of Raz's parallel repetition
theorem\cite{Raz98}, indicate that $\Omega(|E|^4)$
runs in parallel produces a proof system with negligible soundness error\cite{KV11}.

\subsection{Proof of Perfect Zero-Knowledge}\label{PofZK}

%
%

In this section, we prove that protocol $\threepq$ is perfect zero-knowledge.
As a consequence, $\twopcl$ is also zero-knowledge since  everything
\badp{\verifier}\ sees in \twopcl\ can also be observed in $\threepq$.
The proof of zero-knowledge proceeds using the fact that a vertex must appear at least 
twice to have its colour unveiled. This is the \emph{forever hiding property} of the
commitment scheme described in Section \ref{commitment}. Notice that this 
would be enough for \badp{\verifier}\ to learn something about the colouring if
no extra condition on these three vertices is observed. In fact, 
we can easily show that only a few cases of colour disclosure are possible
and in each of these cases, 
\badp{\verifier}\ learns nothing about the colouring that it could not have 
computed on its own. \badp{\verifier}\ can only learn  colour of two connected vertices in $G$ 
and nothing else or the colours of three vertices forming a triangle in $G$.
In each of these cases, \badp{\verifier}\ learns random distinct colours 
for these vertices, which is to be expected by a valid $3$-colouring of $G$.
Let us 
show why this is enforced by the properties (see Section \ref{commitment})
of the commitment scheme. Remember that in order to 
learn the colour assigned to a vertex $i\in V$, $\badp{\verifier}$ 
must ask that vertex to at least $2$ distinct provers. Otherwise, 
$\badp{\verifier}$ sees only random values returned by the provers.
There are 7 cases of figure depending on how $\badp{\verifier}$
selects the 3 edges asked. Figure~\ref{FIG:CCMIP} shows all cases.
The 3 edges indicated for each case are the one picked by $\badp{\verifier}$.
The colours associated to white vertices remain hidden by the forever hiding
property of the commitment scheme. For these vertices, the committed
values received from the provers are just random and independent elements in $\mathbb{F}_3$.
In each of the  7 cases, the unveiled colours of the vertices are displayed
in shade of grey. We see that the only way to unveil the colour of two 
vertices (cases 2, 3, 4, 5, and 6)  is when they are connected by an edge, which means that
the colours of both vertices are random but distinct.  The only way 
for \badp{\verifier}\ to learn the colour of 3 distinct vertices is when 
they form a triangle (case 7). In this case, $\badp{\verifier}$ 
learns three random and distinct colours. Clearly, this is nothing
more than something necessarily true when $G\in \mathsf{3COL}$.
 
These properties of the commitment scheme allows,
for any quantum polynomial-time dishonest verifier $\badp{\verifier}$,
an easy simulator for $\mathbf{view}(\prover_1,\prover_2,\prover_3, \badp{\verifier},G)$
when $G\in\mathsf{3COL}$,
thus establishing that \threepq\ is perfect zero-knowledge.
 
\begin{theorem}\label{zkth}
The three-prover interactive proof system $\threepq$ is perfect
zero-knowledge against quantum verifiers.
\end{theorem}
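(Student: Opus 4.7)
The plan is to exhibit, for every (possibly quantum) polynomial-time dishonest verifier $\badp{\verifier}$, a classical $\ppt$ simulator $\simulator$ with black-box access to $\badp{\verifier}$ whose output distribution coincides with $\mathbf{view}(\prover_1,\prover_2,\prover_3,\badp{\verifier},G)$ on every $G \in \mathsf{3COL}$. Since $\threepq$ is single-round, $\simulator$ first runs $\badp{\verifier}$ on $G$ to extract its three queries $((i,j),r,s)$, $((i'\!,j'),r'\!,s')$, $((i''\!,j''),r''\!,s'')$ and then fabricates the six commitment answers to splice back into the transcript. From the queries, $\simulator$ identifies the set $U \subseteq V$ of \emph{unveiled} vertices, namely those appearing in at least two of the three queries with at least two distinct associated randomness values. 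By the case analysis of Figure~\ref{FIG:CCMIP}, $G[U]$ is always either empty, a single vertex, a single edge, or a triangle of $G$, and is therefore itself $3$-colourable.

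Next, $\simulator$ samples a uniformly random proper $3$-colouring $(c_v)_{v \in U}$ of $G[U]$, an independent uniform mask $b_v \in \mathbb{F}_3$ for each $v \in U$, and an independent uniform trit $\tau_v \in \mathbb{F}_3$ for each vertex $v \notin U$ appearing in some query. For every vertex-randomness pair $(v,\rho)$ occurring in a query, $\simulator$ outputs $b_v \cdot \rho + c_v$ if $v \in U$ and $\tau_v$ otherwise; when $v \notin U$ appears in two queries with identical randomness, the same $\tau_v$ is reused, matching the equality that the consistency test would enforce in the real execution. Finally, $\simulator$ returns these commitments to $\badp{\verifier}$ and outputs the resulting view.

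Distributional equivalence rests on two observations. First, the forever-hiding property of Section~\ref{commitment}: for any $v \notin U$, each real commitment on $v$ has the form $b_v \cdot \rho + c_v$ with $b_v$ uniform in $\mathbb{F}_3$ and $\rho \neq 0$, hence is uniform in $\mathbb{F}_3$ and independent of $c_v$ and of all unrelated commitments, exactly matching $\simulator$'s uniform $\tau_v$. Second, the marginal on $(c_v)_{v\in U}$ of a uniformly random proper $3$-colouring of $G$ is uniform over proper $3$-colourings of $G[U]$. This follows from the action of $\mathbb{S}_3$ on the set of proper $3$-colourings of $G$ by colour-label permutation, which is free whenever $G$ has an edge, preserves the uniform distribution, and acts transitively on the proper colourings of a single vertex, of a single edge (six in total), and of a triangle (six in total); hence the induced marginal on $U$ is itself uniform.

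The only delicate step will be to check, for each of the seven configurations of Figure~\ref{FIG:CCMIP}, that $G[U]$ is indeed empty, a vertex, an edge or a triangle, and that the simulator's recipe reproduces the honest prover commitment joint distribution in that configuration (in particular that the correlations induced by shared $b_v$ and shared $c_v$ across queries are the same in both worlds). Once this case analysis is completed, the equivalence required by Definition~\ref{DEF:standardZK} follows; the conclusion then transfers immediately to $\twopcl$, since any view available to $\badp{\verifier}$ in $\twopcl$ is a subview of one available in $\threepq$.
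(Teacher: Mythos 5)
Your proof is correct and reaches the paper's conclusion by a genuinely different simulator design. The paper's simulator is an \emph{online} construction: it maintains tables $\text{\sc mark}[i,r]$ and $\text{\sc count}[i]$ together with a random permutation $\text{\sc col}[\cdot]$ of $\mathbb{F}_3$, and produces commitment values one at a time, assigning a uniformly random trit to the first occurrence of a vertex and deterministically ``completing'' the implicit unveil to the next free colour $\text{\sc col}[c]$ on the second occurrence, never explicitly instantiating $b_v$ or $c_v$. You instead identify the unveiled set $U$ \emph{up front}, observe that $G[U]$ is a clique of size at most $3$, and then sample the masks $b_v$ and a uniform proper $3$-colouring of $G[U]$ wholesale before computing commitments by the honest formula. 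The two constructions induce the same distribution because assigning $\text{\sc col}[0],\text{\sc col}[1],\ldots$ in order along a uniform permutation of $\mathbb{F}_3$ is precisely a uniform assignment of distinct colours to the unveiled clique, and the uniform trit picked at each first occurrence plays the role of $b_v\cdot\rho + c_v$ for uniform $b_v$. What your argument adds that the paper leaves implicit is the justification that the marginal of a uniformly random proper $3$-colouring of $G$ on the induced clique $G[U]$ is itself uniform; your $\mathbb{S}_3$-action argument (free whenever $G$ has an edge, transitive on colourings of a vertex, an edge, and a triangle) is a clean way to establish this. What the paper's online simulator buys is that it sidesteps that marginal computation entirely, since it never commits to a global colouring; it only ensures the local constraints the verifier could ever check. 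Both proofs rest on the same nontrivial structural step --- the Figure~\ref{FIG:CCMIP} case analysis showing that two unveiled vertices must be adjacent and three must form a triangle --- which you correctly flag as the delicate part that still needs to be verified case by case.
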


\begin{proof}
The simulator \simulator, see Fig.~\ref{simulator}, 
 is classical given blackbox access to $\badp{\verifier}$ (and \badp{\verifier}\ can be quantum).
Consider an execution $\simulator(G)$ upon graph $G=(V,E)$.
It first picks a random permutation $\text{\sc col}[\cdot]:\mathbb{F}_3 \mapsto \mathbb{F}_3$ 
over three colours, each corresponding to a distinct element in $\mathbb{F}_3$.
Table $\text{\sc mark}[i,r]\in\{\true,\false\}$, for $i\in V$ and $r\in\mathbb{F}^*_3$, 
is initialized to $\false$ and will indicate if the output 
of a prover has already been simulated for vertex $i$ with
randomness $r$. Table $\text{\sc count}[i]$, for $i\in V$,
counts the number of times vertex $i$ has been asked so far
during the simulation. Variable $c\in \mathbb{F}_3$, initialized to $0$, indicates
the next colour index the simulator should use when a new colour
must be unveiled during the simulation. 

\begin{figure}[bth]
\begin{quote}
\rule{\linewidth}{1pt}
\simu{\,\,\simulator(G)}{Simulator for \badp{\verifier}'s view 
upon graph $G$ in $\threepq$.}
\emph{All arithmetic below is performed in $\mathbb{F}_3$.}
\begin{enumerate}

\item Let $\text{\sc col}[\cdot]$ be a uniform permutation of $\mathbb{F}_3$ and let 
$c := 0$.
\item $\forall i \inn V, \forall r \inn \mathbb{F}^*_3$, 
let $\text{\sc mark}[i,r] := \false$ and  $\text{\sc count}[i] := 0$.

\item Run $\badp{\verifier}$  until it returns $((i_1,j_1),r_1, s_1)$, 
$((i_2,j_2),r_2, s_2)$, $((i_3,j_3),r_3,s_3)$.

\item {\sf For each $\ell\in \{1,2,3\}$ do:}
\begin{itemize}
\item Whenever $(i_{\ell},j_{\ell}) \inn E$ is provided by $\badp{\verifier}$,
{\sf output} $(w^\ell_{i_\ell},w^\ell_{j_\ell})\in \mathbb{F}_3\times\mathbb{F}_3$ to 
$\badp{\verifier}$, both computed as follows:
\begin{enumerate}
\item {\sf If $\neg \text{\sc mark}[i_\ell,r_{\ell}]$ then}
\begin{itemize}
\item {\sf If $\text{\sc count}[i_\ell] = 0$ then} pick $W[i_\ell,r_{\ell}]\in_R \mathbb{F}_3$.
\item {\sf If $\text{\sc count}[i_\ell] = 1$ then}
\begin{itemize}
\item $W[i_{\ell},r_{\ell}] := - \text{\sc col}[c] - W[i_\ell,-r_{\ell}]$,
\item $c := c+1$.
\end{itemize}
\item $\text{\sc count}[i_{\ell}] := 
\text{\sc count}[i_\ell]+1$.
\end{itemize}
\item {\sf If $\neg \text{\sc mark}[j_\ell,s_{\ell}]$ then}
\begin{itemize}
\item {\sf If $\text{\sc count}[j_\ell] = 0$ then} pick $W[j_\ell,s_{\ell}]\in_R \mathbb{F}_3$.
\item {\sf If $\text{\sc count}[j_\ell] = 1$ then}
\begin{itemize}
\item $W[j_{\ell},s_{\ell}] := - \text{\sc col}[c] - W[j_\ell,-s_{\ell}]$,
\item $c := c+1$.
\end{itemize}
\item $\text{\sc count}[j_{\ell}] := 
\text{\sc count}[j_\ell]+1$.
\end{itemize}
\item  $\text{\sc mark}[i_\ell,r_{\ell}] := \true$, 
$\text{\sc mark}[j_{\ell}, s_\ell] := \true$.
\item $w^\ell_{i_\ell} := W[i_\ell,r_{\ell}]$, $w^\ell_{j_\ell}:= W[j_\ell, s_\ell]$.
\end{enumerate}
\end{itemize}
\end{enumerate}

\rule{\linewidth}{1pt}
\end{quote}
\caption{Simulator for \threepq.}\label{simulator}
\end{figure}

\badp{\verifier}\ is then invoked to produce questions
$((i_\ell, j_\ell),r_\ell,s_\ell)$ for all provers $\prover_\ell, \ell\in \{1,2,3\}$. 
\simulator\ now aims at setting the values $(w^\ell_{i_\ell}, w^\ell_{j_{\ell}})$ 
for  $\prover_\ell$'s commitments. If $(i_\ell,j_\ell)\notin E$,
\simulator\ produces no value for  $(w^\ell_{i_\ell}, w^\ell_{j_{\ell}})$, exactly
as $\prover_\ell$ in \threepq. 

When  $(i_\ell,j_\ell)\in E$, \simulator\ first produces $\prover_\ell$'s
commitment $w^\ell_{i_\ell}$  for $i_\ell \in V$ and then produces $\prover_\ell$'s
commitment $w^{\ell}_{j_\ell}$ for $j_\ell\in V$. We show how to compute
$w^{\ell}_{i_\ell}$, $w^{\ell}_{j_\ell}$ is computed similarly
mutatis mutandis:
\begin{itemize}
\item if $\text{\sc mark}[i_\ell,r_\ell]$  
then \simulator\ returns the value
of  $w^\ell_{i_\ell}$ already determined for the simulation of the commitment 
of an \emph{earlier} prover $\prover_h$, $h<\ell$.
This ensures that both the commitment's \emph{consistency test} 
performed and the well-definition test
are always successful, as in $\threepq$ with honest provers.
\item if $\neg\text{\sc mark}[i_\ell,r_\ell]$  
then \simulator\ has never simulated a commitment of the colour
for vertex $i_\ell$ with randomness $r_\ell$.  The value  $\text{\sc count}[i_\ell]$
indicates the number of time prior to this value for $\ell$, vertex $i_\ell$
has been asked:
\begin{itemize}
\item If $\text{\sc count}[i_\ell]=0$ then $w^\ell_{i_\ell}\in_R \mathbb{F}_3$
is picked uniformly at random, as it should be when the commitment value
for the colour of vertex $i_\ell$ is observed in isolation. 
\item If $\text{\sc count}[i_\ell]=1$ then the colour associated to vertex
$i_\ell$ has been committed to value $w^h_{i_\ell}$ 
by an \emph{earlier} simulated prover $\prover_h$, $h<\ell$ upon randomness
$-r_{\ell}$ (otherwise, $\text{\sc mark}[i_\ell,r_\ell]=\true$). 
\simulator\ sets $w^{\ell}_{i_\ell} = -\text{\sc col}[c]-w^{h}_{i_\ell}$,
which satisfies the \emph{implicit unveiling} of random colour $\text{\sc col}[c]
= -w^{\ell}_{i_\ell} -w^{h}_{i_\ell}$. The current colour $c$ is incremented.
\end{itemize}
The value of $\text{\sc count}[i_\ell]$ is increased by one and 
$\text{\sc mark}[i_\ell, r_\ell] = \true$, as the colour of vertex $i_\ell$ with randomness
$r_\ell$ has been committed upon by the simulated prover $\prover_\ell$.
\end{itemize}  

Let $(w^1_{i_1}, w^1_{j_1})$, $(w^2_{i_2}, w^2_{j_2})$,
and $(w^3_{i_3}, w^3_{j_3})$ be all commitment values 
simulated by \simulator. As discussed above 
and shown in Fig.~\ref{FIG:CCMIP}, 
the colours of no more than 3 vertices are unveiled in the process.
\simulator\ always unveils as many different colours there
are colours unveiled to $\badp{\verifier}$. 
If  \simulator's simulated committed values unveils only
the colour of one vertex then that colour is random, as it should in this case
 in \threepq. If  
\simulator's committed values unveils the colours of exactly 2 vertices then
these 2 vertices form an edge in $G$ and the colours are two different
random colours, as it should be in \threepq. Finally, when \simulator's
committed values unveil the colours of exactly 3 vertices then 
these vertices form a triangle in $G$. The 3 colours unveiled by 
\simulator\ to $\badp{\verifier}$ are different and assigned randomly
to each of the 3 vertices, as it is in  \threepq.
Otherwise, if $w^\ell_{i}$ for $i\in V$ has been generated with only
one random value then $w^\ell_{i}$ is random and uniform in $\mathbb{F}_ 3$,
exactly as it is in \threepq\ in the same situation. 
It is now clear that,
\[ \mathbf{view}(\prover_1,\prover_2,\prover_3,\badp{\verifier}, G) = \simulator(G)\enspace,
\]
and $\threepq$ is perfect zero-knowledge.
\qed
\end{proof}

\begin{figure}[th]
\centering
\includegraphics[angle=0, height=7.3cm, width=0.8\textwidth]{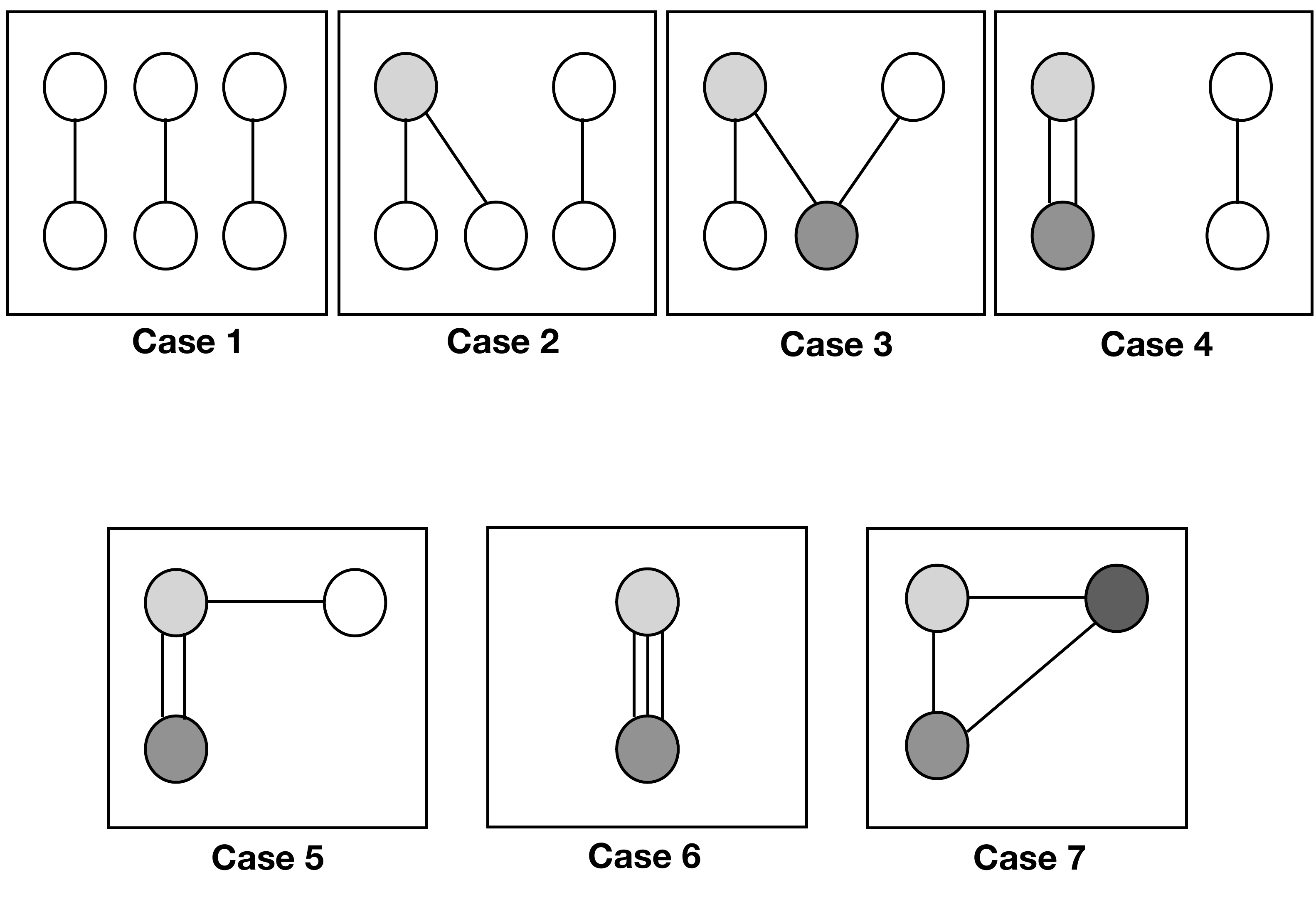}
\caption{The 7 ways to unveil the colours of at most $3$ nodes in \threepq.}\label{FIG:CCMIP}
\end{figure}

\section{Conclusion and Open Problems}

We have provided a three-prover perfect zero-knowledge 
proof system for {\bf NP} sound against entangled provers that is implementable in some 
well controlled environment. In order  to make it fully practical,
it would be better to find a protocol with smaller soundness error 
and also requiring only two provers. Is it possible? Moreover, we would like to extend
our techniques to prove any language in $\mathbf{QCMA}$ or $\mathbf{QMA}$, the natural quantum extensions of  $\mathbf{NP}$.
We would also want to prove whether \twopstd\ is sound against entangled provers.
Finally, we seek a variant of \twopstd\ that would be sound against No-Signalling provers
and a variant of \twopcl\ and \threepq\ that is both sound against No-Signalling provers and Zero-Knowledge.



\section*{Acknowledgements}
We would like to thank
P.~Alikhani,
N.~Brunner,
S.~Designolle,
A.~Chailloux,
A.~Leverrier,
W.~Shi,
T.~Vidick,
and
H.~Zbinden
for various discussions about earlier versions of this work. We would also like to thank Jeremy Clark for his insightful comments.



\end{document}